\DeclareMathOperator{\im}{Im}
\newcommand{\vf}{\varphi}
\newcommand{\vk}{\varkappa}
\newcommand{\de}{\delta}
\newcommand{\la}{\lambda}
\newcommand{\La}{\Lambda}
\newtheorem{proposition}{Proposition}
\newtheorem{theorem}{Theorem}[section]
\newtheorem*{remark}{Remark}
\begin{document}
\allowdisplaybreaks[4]% позволяет переносить многострочные формулы
\frenchspacing% уменьшение пробелов после запятых
\setlength{\unitlength}{1pt}% устанавливает единицу длины в окружении picture

\title{{\Large\textbf{Multichannel scattering for the Schr\"{o}dinger \\ equation on a line with different thresholds at both infinities}}}

\date{}

\author{P.O. Kazinski\thanks{E-mail: \texttt{kpo@phys.tsu.ru}}\;\, and P.S. Korolev\thanks{E-mail: \texttt{kizorph.d@gmail.com}}\\[0.5em]
{\normalsize Physics Faculty, Tomsk State University, Tomsk 634050, Russia}}

\maketitle

\begin{abstract}

The multichannel scattering problem for the stationary Schr\"{o}dinger equation on a line with different thresholds at both infinities is investigated. The analytical structure of the Jost solutions and of the transition matrix relating the Jost solutions as functions of the spectral parameter is described. Unitarity of the scattering matrix is proved in the general case when some of the scattering channels can be closed and the thresholds can be different at left and right infinities on the line. The symmetry relations of the $S$-matrix are established. The condition determining the bound states is obtained. The asymptotics of the Jost functions and of the transition matrix are derived for a large spectral parameter.

\end{abstract}

\section{Introduction}

The scattering problem for a one-dimensional matrix Schr\"{o}dinger equation is a classical problem of quantum theory. The exhaustive treatment of general properties of such scattering on semiaxis, in particular, the proof of unitarity of the $S$-matrix in the presence of closed scattering channels, is given in the book \cite{Newton_scat}, Chap. 17. As far as multichannel scattering on the whole line is concerned, this problem was investigated in many works especially in regard to the inverse scattering problem and the construction of exact solutions to the hierarchies of integrable nonlinear partial differential equations \cite{Zakharov1980}. Nevertheless, to our knowledge, the description of properties of the $S$-matrix, of the Jost solutions, and of the bound states in the general case of multichannel scattering on a line with different thresholds at both left and right infinities is absent in the literature. Our aim is to fill this gap.

The study of the analytical structure of the Jost solutions and the $S$-matrix for one-channel scattering on the whole line in relation to the inverse scattering problem was being conducted already at the end of the 50s \cite{Faddeev1958,Kay1960,Faddeev1964,Faddeev1974}. As for the relatively recent papers regarding one-channel scattering including scattering on potentials with asymmetric asymptotics at left and right infinities, see, e.g., \cite{Newton1980,Zakharov1980,Aktosun1992,Aktosun1994,Gesztesy1997,Ostrovsky2005,deMonvel2008}. The general properties of two-channel scattering for both identical and distinct thresholds were considered in \cite{Ning1995,melgaard2001,melgaard2002,vanDijk2008}, the thresholds being the same at both infinities. In the papers \cite{Wadati1974,Calogero1976,Wadati1980,Alonso1982,Olmedilla1985,Zakhariev1990,Kiers1996,Aktosun2001,Corona2004,Bondarenko2017}, these results were generalized to multichannel scattering on a line for the case when all the reaction thresholds coincide at both left and right infinities. The multichannel scattering problem with different thresholds identical at both infinities was investigated in \cite{Sofianos1997,Braun2003}. In the works \cite{Zakharov1980,Aktosun2000,Sakhnovich2003}, the multichannel scattering problem for systems of a Hamiltonian type with matrix potentials vanishing at both infinities was studied. However, unitarity of the $S$-matrix in the presence of closed channels was not proved in these papers. In the present paper, we prove unitarity of the scattering matrix for a stationary matrix Schr\"{o}dinger equation on a line in the general case where the reaction thresholds do not coincide at both infinities and some of the channels are closed. Furthermore, we obtain the other relations connecting the transmission and reflection matrices for open and closed channels.

For such a scattering problem, we describe the analytical structure of the Jost solutions and of the transition matrix relating the bases of the Jost solutions. We prove the necessary and sufficient condition specifying the positions of the bound states. The form of this condition is well-known for multichannel scattering (see, e.g., \cite{Zakharov1980,Wadati1974,Wadati1980,Bondarenko2017}). However, we show that this condition also holds for the scattering problem with different thresholds at both left and right infinities.

The paper is organized as follows. In Sec. \ref{Jost_Sols_Sec}, the analytical properties of the Jost solutions are described. Sec. \ref{Analyt_Prop_Tran_Mat_Sec} is devoted to analytical properties of the transition matrix. In Sec. \ref{Basic_Ident_Sec}, the basic identities for the scattering matrix are discussed. In Sec. \ref{Identities_Open_Chan_Sec}, the relations between the transmission and reflection matrices in open channels are investigated. Sec. \ref{Unitarity_Sec} is devoted to the proof of unitarity of the $S$-matrix in open channels. In Sec. \ref{Bouns_States_Sec}, we discuss the necessary and sufficient condition determining the location of bound states. In Sec. \ref{Shartwave_Asypt_Sec}, we obtain the asymptotics of the Jost solutions and of the transition matrix for a large spectral parameter. In Conclusion, we summarize the results. The summation over repeated indices is always understood unless otherwise stated. Furthermore, wherever it does not lead to misunderstanding, we use the matrix notation.

%\selectlanguage{english}
%\newpage
\section{Jost solutions and their analytical properties}\label{Jost_Sols_Sec}

Consider the matrix ordinary differential equation
\begin{equation}\label{Schr_eqn}
	\big[\partial_zg_{ij}(z)\partial_z +V_{ij}(z;\la)\big]u_j(z)=0,\qquad z\in \mathbb{R},\quad i,j=\overline{1,N},
\end{equation}
where $\la\in \mathbb{C}$ is an auxiliary parameter,
\begin{equation}
	V_{ij}(z;\la)=V_{ij}(z)-\la g_{ij}(z),
\end{equation}
and the matrices $g_{ij}(z)$ and $V_{ij}(z)$ are real and symmetric. The elements of these matrices are piecewise continuous functions. The matrix $g_{ij}(z)$ is positive-definite. We also assume that there exists $L_z>0$ such that
\begin{equation}\label{finite_supp}
	g_{ij}(z)|_{z>L_z}=g^+_{ij},\qquad V_{ij}(z)|_{z>L_z}=V^+_{ij},\qquad g_{ij}(z)|_{z<-L_z}=g^-_{ij},
    \qquad V_{ij}(z)|_{z<-L_z}=V^-_{ij},
\end{equation}
where $g^\pm_{ij}$ and $V^\pm_{ij}$ are constant matrices. The spectral parameter $\la$ is not the energy, in general. For example, the energy enters into the matrices $g_{ij}(z)$ and $V_{ij}(z)$ as a parameter for the scattering problems in electrodynamics of dispersive media and the physical value of $\la$ is zero in this case. The corresponding nonstationary scattering problem is not described by the nonstationary Schr\"{o}dinger equation associated with \eqref{Schr_eqn}. Notice that all the results of the present paper are applicable to the case where the matrices $g_{ij}(z)$ and $V_{ij}(z)$ are not real and symmetric but Hermitian. In that case, one just has to separate the real and imaginary parts of the initial matrix Schr\"{o}dinger equation. The resulting system of equations will be of the form \eqref{Schr_eqn} but of twice the size of the initial system.

Let
\begin{equation}
	g^\pm_{ij}f^\pm_{js}\La^\pm_s=V^\pm_{ij}f^\pm_{js} \qquad\text{(no summation over $s$)},\qquad s=\overline{1,N},
\end{equation}
where $\La^\pm_s\in \mathbb{R}$ are eigenvalues and the following normalization condition is true
\begin{equation}
	f_{\pm}^Tg_\pm f_\pm=1.
\end{equation}
We consider the general case where all of $\La^+_s$ and all of $\La^-_s$ are different. The degenerate
case is obtained by going to the respective limit. Let us introduce the diagonal matrices
\begin{equation}
	K^\pm_{ss'}:=\sqrt{\La^\pm_s-\la}\de_{ss'},
\end{equation}
where the principal branch of the square root is chosen. In particular, if  $\la>\La^\pm_s$, then $\sqrt{\La^\pm_s-\la}=i\sqrt{\la-\La^\pm_s}$. By definition, the Jost solutions to Eq. \eqref{Schr_eqn} have the asymptotics
\begin{equation}\label{Jost_asympt}
	(F^+_\pm)_{is}(z;\la)\underset{z\rightarrow\infty}{\rightarrow}(f_+)_{is'}(e^{\pm iK_+z})_{s's},
    \qquad (F^-_\pm)_{is}(z;\la)\underset{z\rightarrow-\infty}{\rightarrow}(f_-)_{is'}(e^{\pm iK_-z})_{s's}.
\end{equation}
For these solutions, we can write
\begin{equation}\label{Jost_sols}
\begin{split}
	F^+_+(z;\la)&=f_+e^{iK_+z}+\int_z^\infty dtf_+\frac{\sin K_+(z-t)}{K_+}f^T_+U_+(t;\la)F^+_+(t;\la),\\
	F^+_-(z;\la)&=f_+e^{-iK_+z}+\int_z^\infty dtf_+\frac{\sin K_+(z-t)}{K_+}f^T_+U_+(t;\la)F^+_-(t;\la),\\
	F^-_+(z;\la)&=f_-e^{iK_-z}-\int_{-\infty}^z dtf_-\frac{\sin K_-(z-t)}{K_-}f^T_-U_-(t;\la)F^-_+(t;\la),\\
	F^-_-(z;\la)&=f_-e^{-iK_-z}-\int_{-\infty}^z dtf_-\frac{\sin K_-(z-t)}{K_-}f^T_-U_-(t;\la)F^-_-(t;\la),
\end{split}
\end{equation}
where
\begin{equation}
	U_\pm(z;\la):=\partial_zg (z)\partial_z+V(z;\la)-\partial_zg_\pm\partial_z-V_\pm+\la g_\pm.
\end{equation}
In virtue of the assumption \eqref{finite_supp}, the integration in the integral representations of the Jost solutions is performed over a finite interval. Therefore, the Jost solutions are analytic functions of $\la$ on a double-sheeted Riemann surface. The solutions $(F^+_\pm)_{is}$ are the different branches of the same vector-valued analytic function of $\la$ with the branching point $\la=\La^+_s$, whereas the solutions $(F^-_\pm)_{is}$ are the different branches of the same vector-valued analytic function of $\la$ with the branching point $\la=\La^-_s$.

\section{Analytical properties of the transition matrix}\label{Analyt_Prop_Tran_Mat_Sec}

The Jost solutions $F^+_\pm$ and $F^-_\pm$ constitute bases in the space of solutions of Eq. \eqref{Schr_eqn}. Consequently,
\begin{equation}\label{basis_expan}
	F^+_+=F^-_+\Phi_++F^-_-\Psi_+,\qquad F^+_-=F^-_+\Psi_- +F^-_-\Phi_-,
\end{equation}
where $(\Phi_\pm)_{ss'}(\la)$ and $(\Psi_\pm)_{ss'}(\la)$ are some $z$-independent matrices. It is clear, that the Wronskian,
\begin{equation}
	w[\vf,\psi]:=\vf^T(z)g(z)\partial_z\psi(z)-\partial_z\vf^T(z)g(z)\psi(z),
\end{equation}
of two solutions $\vf(z)$ and $\psi(z)$ of Eq. \eqref{Schr_eqn} is independent of $z$ and defines a skew-symmetric scalar product on the space of solutions of Eq. \eqref{Schr_eqn}. From the asymptotics \eqref{Jost_asympt} we have
\begin{equation}\label{basic_wronsk}
	w[F^\pm_+,F^\pm_+]=w[F^\pm_-,F^\pm_-]=0,\qquad w[F^\pm_+,F^\pm_-]=-2iK_\pm.
\end{equation}
This skew-symmetric scalar product allows one to express the matrices $\Phi_\pm$ and $\Psi_\pm$ in terms of Wronskians of the Jost solutions
\begin{equation}\label{PhiPsiWronsk}
	\begin{aligned}
		2iK_-\Phi_+&=w[F^-_-,F^+_+],&\qquad -2iK_-\Phi_-&=w[F^-_+,F^+_-],\\
		-2iK_-\Psi_+&=w[F^-_+,F^+_+],&\qquad 2iK_-\Psi_-&=w[F^-_-,F^+_-].
	\end{aligned}
\end{equation}
We see from these relations that $(\Phi_\pm)_{ss'}$ and $(\Psi_\pm)_{ss'}$ are analytic functions of $\la$ with branching points of the square root type at $\la=\La^-_s$ and $\la=\La^+_{s'}$. The four functions $(\Phi_\pm)_{ss'}$, $(\Psi_\pm)_{ss'}$ are the four branches of the same analytic function of $\la$. Indeed, bypassing the branching point $\la=\La^-_s$, we have
\begin{equation}
	(K_-)_s\rightarrow-(K_-)_s,\qquad (F^-_\pm)_{is}\rightarrow (F^-_\mp)_{is}.
\end{equation}
Then, using \eqref{PhiPsiWronsk}, we obtain
\begin{equation}
	(\Phi_\pm)_{ss'}\rightarrow(\Psi_\pm)_{ss'},\qquad (\Psi_\pm)_{ss'}\rightarrow(\Phi_\pm)_{ss'}.
\end{equation}
Similarly, bypassing the branching point $\la=\La^+_{s'}$, we come to
\begin{equation}
	(\Phi_\pm)_{ss'}\rightarrow(\Psi_\mp)_{ss'},\qquad (\Psi_\pm)_{ss'}\rightarrow(\Phi_\mp)_{ss'}.
\end{equation}
Hence, starting from $(\Phi_+)_{ss'}$ and bypassing successively the branching points $\la=\La^-_{s}$ and $\la=\La^+_{s'}$, we obtain all the four functions $(\Phi_\pm)_{ss'}$, $(\Psi_\pm)_{ss'}$.

Consider the action of complex conjugation on the matrices $\Phi_\pm(\la)$ and $\Psi_\pm(\la)$. If $\la$ does not belong to the cuts of the functions $(K_-)_s$ and $(K_+)_{s'}$, then, using \eqref{Jost_sols} and \eqref{PhiPsiWronsk}, we obtain the following relations
\begin{equation}\label{PhiPsi_cc_regul}
	(\Phi_\pm)^*_{ss'}(\la)=(\Phi_\mp)_{ss'}(\la^*),\qquad (\Psi_\pm)^*_{ss'}(\la)=(\Psi_\mp)_{ss'}(\la^*).
\end{equation}
If $\la$ lies on the cut of the function $(K_-)_s$ but does not belong to the cut of the function $(K_+)_{s'}$, then
\begin{equation}
	(\Phi_\pm)^*_{ss'}(\la)=(\Psi_\mp)_{ss'}(\la).
\end{equation}
If $\la$ lies on the cut of the function $(K_+)_{s'}$ but does not belong to the cut of function  $(K_-)_{s}$, then
\begin{equation}
	(\Phi_\pm)^*_{ss'}(\la)=(\Psi_\pm)_{ss'}(\la).
\end{equation}
If $\la$ lies on the cuts of the functions $(K_-)_{s}$ and $(K_+)_{s'}$, we have
\begin{equation}
	(\Phi_\pm)^*_{ss'}(\la)=(\Phi_\pm)_{ss'}(\la),\qquad (\Psi_\pm)^*_{ss'}(\la)=(\Psi_\pm)_{ss'}(\la),
\end{equation}
i.e., in this case $(\Phi_\pm)_{ss'}(\la)$ and $(\Psi_\pm)_{ss'}(\la)$ are real.

\section{Basic identities for the scattering matrix}\label{Basic_Ident_Sec}

Formulas \eqref{basis_expan}, \eqref{basic_wronsk} yield the relations
\begin{equation}\label{PhiPsi_rels}
\begin{aligned}
	\Phi^T_+ K_-\Psi_+-\Psi^T_+K_-\Phi_+&=0,&\qquad \Phi^T_+ K_-\Phi_- -\Psi^T_+K_-\Psi_-&=K_+,&\qquad \Phi^T_- K_-\Psi_-
    -\Psi^T_-K_-\Phi_-&=0,\\
	\Phi_+ K_+^{-1} \Psi_-^T -\Psi_-K_+^{-1} \Phi_+^T&=0,&\qquad \Phi_+ K_+^{-1} \Phi_-^T -\Psi_-K_+^{-1} \Psi_+^T&=K_-^{-1},&\qquad
    \Phi_- K_+^{-1}\Psi_+^T -\Psi_+K_+^{-1}\Phi_-^T&=0.
\end{aligned}
\end{equation}
Let us introduce the transmission matrices $t_{(1,2)}$ and the reflection matrices $r_{(1,2)}$:
\begin{equation}\label{t_r_def}
	F^+_+t_{(1)}=F^-_++F^-_-r_{(1)},\qquad F^-_-t_{(2)}=F^+_-+F^+_+r_{(2)}.
\end{equation}
Combining the relations \eqref{basis_expan} and comparing the result with \eqref{t_r_def}, we arrive at
\begin{equation}\label{t1_r1_t2_r2_def}
	t_{(1)}=\Phi_+^{-1},\qquad r_{(1)}=\Psi_+\Phi_+^{-1},\qquad t_{(2)}=\Phi_- -\Psi_+\Phi_+^{-1}\Psi_-,\qquad
    r_{(2)}=-\Phi_+^{-1}\Psi_-.
\end{equation}
The relations \eqref{PhiPsi_rels} imply the symmetry properties
\begin{equation}\label{symmetry_rels}
	K_-t_{(2)}=t_{(1)}^T K_+,\qquad K_-r_{(1)}=r_{(1)}^TK_-,\qquad K_+r_{(2)}=r_{(2)}^TK_+.
\end{equation}
Define the $S$-matrix as
\begin{equation}
	S:=
	\left[
	\begin{array}{cc}
		t_{(1)} & r_{(2)} \\
		r_{(1)} & t_{(2)} \\
	\end{array}
	\right].
\end{equation}
To shorten the notation, we also introduce the operation that acts on the products of matrices $\Phi_\pm$, $\Psi_\pm$ and their inverses by the rule
\begin{equation}
	\bar{\Phi}_\pm:=\Phi_\mp,\qquad \bar{\Psi}_\pm:=\Psi_\mp,\qquad \overline{A^{-1}}=(\bar{A})^{-1},\qquad
    \overline{AB}:=\bar{A}\bar{B}.
\end{equation}
Then the $S$-matrix possesses the symmetries
\begin{equation}\label{S_symmetries}
	\begin{gathered}
		\left[
		\begin{array}{cc}
			0 & K_- \\
			K_+ & 0 \\
			\end{array}
			\right] S=
			S^T
			\left[
			\begin{array}{cc}
			0 & K_+ \\
			K_- & 0 \\
			\end{array}
			\right],\qquad
			\left[
			\begin{array}{cc}
			0 & K_- \\
			K_+ & 0 \\
			\end{array}
			\right] \bar{S}=
			\bar{S}^T
			\left[
			\begin{array}{cc}
			0 & K_+ \\
			K_- & 0 \\
			\end{array}
			\right],\\
			\bar{S}^T
			\left[
			\begin{array}{cc}
			K_+ & 0 \\
			0 & K_- \\
			\end{array}
			\right]S=
			\left[
			\begin{array}{cc}
			K_- & 0 \\
			0 & K_+ \\
		\end{array}
		\right].
	\end{gathered}
\end{equation}
\begin{theorem}\label{ooUnitTh}
	If $\la\in \mathbb{R}$ belongs to none of the cuts of the functions $(K_\pm)_s$, $s=\overline{1,N}$, i.e., when all the scattering channels are open, the $S$-matrix is unitary
	\begin{equation}\label{S_unitar_allopen}
		S^\dag
		\left[
		\begin{array}{cc}
		K_+ & 0 \\
		0 & K_- \\
		\end{array}
		\right]S=
		\left[
		\begin{array}{cc}
		K_- & 0 \\
		0 & K_+ \\
		\end{array}
		\right].
	\end{equation}
\end{theorem}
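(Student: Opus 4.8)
The plan is to obtain \eqref{S_unitar_allopen} directly from the algebraic symmetry relations \eqref{S_symmetries} together with the complex-conjugation properties of $\Phi_\pm$, $\Psi_\pm$ derived in Sec. \ref{Analyt_Prop_Tran_Mat_Sec}, rather than reworking Wronskian identities from scratch. The crucial remark is that the hypothesis of the theorem---$\la\in\mathbb{R}$ lying on none of the cuts of the functions $(K_\pm)_s$---is precisely the situation in which the regular case \eqref{PhiPsi_cc_regul} applies, so that $(\Phi_\pm)^*_{ss'}(\la)=(\Phi_\mp)_{ss'}(\la)$ and $(\Psi_\pm)^*_{ss'}(\la)=(\Psi_\mp)_{ss'}(\la)$ for all indices. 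On this set all the diagonal entries $(K_\pm)_s=\sqrt{\La^\pm_s-\la}$ are real and positive, so the block matrices in \eqref{S_unitar_allopen} are Hermitian and positive-definite, and the relation is a genuine unitarity statement with respect to the inner products they define.

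First I would show that for such $\la$ complex conjugation of the $S$-matrix coincides with the bar operation, i.e. $S^*=\bar S$. Using the expressions \eqref{t1_r1_t2_r2_def} for $t_{(1)}=\Phi_+^{-1}$, $r_{(1)}=\Psi_+\Phi_+^{-1}$, $r_{(2)}=-\Phi_+^{-1}\Psi_-$, $t_{(2)}=\Phi_--\Psi_+\Phi_+^{-1}\Psi_-$, the rules $\overline{A^{-1}}=\bar A^{-1}$, $\overline{AB}=\bar A\bar B$, and the conjugation relations just quoted, one checks block by block that $(t_{(1)})^*=\Phi_-^{-1}=\overline{t_{(1)}}$, and likewise for $r_{(1)}$, $r_{(2)}$, $t_{(2)}$. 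Hence $S^*=\bar S$, and therefore $S^\dag=(S^*)^T=\bar S^T$ for $\la$ in the open set in question.

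Finally I would substitute $\bar S^T=S^\dag$ into the third of the symmetry relations in \eqref{S_symmetries}, namely $\bar S^T\diag(K_+,K_-)\,S=\diag(K_-,K_+)$, which immediately yields \eqref{S_unitar_allopen}. I do not expect a genuine obstacle in this argument; the only delicate point is bookkeeping with branch conventions---making sure that ``$\la$ off all cuts'' is exactly the regular case \eqref{PhiPsi_cc_regul} and not one of the boundary cases in which some $(K_\pm)_s$ sits on its own cut, and that the identities \eqref{S_symmetries}, established for generic $\la$, persist by analytic continuation (equivalently, by continuity) on this open subset of the real axis.
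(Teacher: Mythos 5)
Your proposal is correct and is exactly the paper's argument: the paper's proof consists of the single sentence that the theorem follows from the last equality in \eqref{S_symmetries} combined with \eqref{PhiPsi_cc_regul}, which is precisely your observation that $S^*=\bar S$ (hence $S^\dag=\bar S^T$) for real $\la$ off all cuts, substituted into $\bar S^T\mathrm{diag}(K_+,K_-)S=\mathrm{diag}(K_-,K_+)$. You merely spell out the block-by-block verification that the paper leaves implicit.
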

\begin{proof}
	The proof of the theorem follows directly from the last equality in \eqref{S_symmetries} and the property \eqref{PhiPsi_cc_regul}.
\end{proof}
\begin{remark}
	Introducing the notation
	\begin{equation}
		\tilde{\Phi}_\pm:=K_-^{1/2} \Phi_\pm K_+^{-1/2},\qquad \tilde{\Psi}_\pm:=K_-^{1/2} \Psi_\pm K_+^{-1/2},
	\end{equation}
	one can reduce \eqref{S_unitar_allopen} to the standard form
	\begin{equation}
		\tilde{S}^\dag \tilde{S}=1,
	\end{equation}
	where $\tilde{S}$ is expressed in terms of $\tilde{\Phi}_\pm$ and $\tilde{\Psi}_\pm$ in the same way that $S$ is expressed in terms of $\Phi_\pm$ and $ \Psi_\pm$.
	
\end{remark}

\section{Identities in the subspace of open channels}\label{Identities_Open_Chan_Sec}

It is more difficult to prove unitarity of the $S$-matrix in the case when some of the channels are closed for $z\rightarrow-\infty$ and/or $z\rightarrow\infty$, i.e., when for some fixed $\la\in \mathbb{R}$ some of $(K_\pm)_s$ are purely imaginary. Let the number of open channels for $z\rightarrow-\infty$ be $l_o$, and the number of closed channels be $l_c$. As for the numbers of open and closed channels for $z\rightarrow\infty$, we introduce the notation $r_o$ and $r_c$, respectively. For definiteness, we assume that $l_o\geqslant r_o$. It is clear that $l_o+l_c=r_o+r_c=N$. We split the relations \eqref{t_r_def} into blocks with respect to the indices $s$, $s'$ in accordance with the splitting into open and closed channels,
\begin{subequations}\label{t_r_blocks}
	\begin{align}
	(F^+_+)_ot_{(1)oo}+(F^+_+)_ct_{(1)co}&=(F^-_+)_o+(F^-_-)_or_{(1)oo}+(F^-_-)_cr_{(1)co},\label{7eq}\\
	(F^+_+)_ot_{(1)oc}+(F^+_+)_ct_{(1)cc}&=(F^-_+)_c+(F^-_-)_or_{(1)oc}+(F^-_-)_cr_{(1)cc},\label{8eq}\\
	(F^-_-)_ot_{(2)oo}+(F^-_-)_ct_{(2)co}&=(F^+_-)_o+(F^+_+)_or_{(2)oo}+(F^+_+)_cr_{(2)co},\\
	(F^-_-)_ot_{(2)oc}+(F^-_-)_ct_{(2)cc}&=(F^+_-)_c+(F^+_+)_or_{(2)oc}+(F^+_+)_cr_{(2)cc},\label{10eq}
	\end{align}
\end{subequations}
where, for example,
\begin{equation}
	t_{(1)}=
	\left[
	\begin{array}{cc}
		t_{(1)oo} & t_{(1)oc} \\
		t_{(1)co} & t_{(1)cc} \\
		\end{array}
		\right],\qquad
		F^+_\pm=\left[
		\begin{array}{cc}
		(F^+_\pm)_o & (F^\pm_\pm)_c \\
	\end{array}
	\right].
\end{equation}
Taking complex conjugate of these equations, bearing in mind the above conditions on $\la$, and using the expressions for the Jost solutions \eqref{Jost_sols}, we arrive at
\begin{subequations}
	\begin{align}
	(F^+_-)_ot^*_{(1)oo}+(F^+_+)_ct^*_{(1)co}&=(F^-_-)_o+(F^-_+)_or^*_{(1)oo}+(F^-_-)_cr^*_{(1)co},\label{7*eq}\\
	(F^+_-)_ot^*_{(1)oc}+(F^+_+)_ct^*_{(1)cc}&=(F^-_+)_c+(F^-_+)_or^*_{(1)oc}+(F^-_-)_cr^*_{(1)cc},\label{8*eq}\\
	(F^-_+)_ot^*_{(2)oo}+(F^-_-)_ct^*_{(2)co}&=(F^+_+)_o+(F^+_-)_or^*_{(2)oo}+(F^+_+)_cr^*_{(2)co},\label{9*eq}\\
	(F^-_+)_ot^*_{(2)oc}+(F^-_-)_ct^*_{(2)cc}&=(F^+_-)_c+(F^+_-)_or^*_{(2)oc}+(F^+_+)_cr^*_{(2)cc}.\label{10*eq}
	\end{align}
\end{subequations}
Introducing the notation,
\begin{equation}
	(K_\pm)_o=:\vk_\pm^o,\qquad (K_\pm)_c=:i\vk_\pm^c,\qquad \vk^{o,c}_\pm>0,
\end{equation}
the symmetry relations \eqref{symmetry_rels} become
\begin{equation}\label{symmetry_rels_blocks}
	\begin{gathered}
		\vk^o_- t_{(2)oo}=(t_{(1)oo})^T \vk^o_+,\quad \vk^o_- t_{(2)oc}=(t_{(1)co})^T i\vk^c_+,\quad i\vk^c_- t_{(2)co}=(t_{(1)oc})^T \vk^o_+,\quad \vk^c_- t_{(2)cc}=(t_{(1)cc})^T \vk^c_+,\\
		\vk^o_- r_{(1)oo}=(r_{(1)oo})^T \vk^o_-,\quad \vk^o_- r_{(1)oc}=(r_{(1)co})^T i\vk^c_-,\quad i\vk^c_- r_{(1)co}=(r_{(1)oc})^T \vk^o_-,\quad \vk^c_- r_{(1)cc}=(r_{(1)cc})^T \vk^c_-,\\
		\vk^o_+ r_{(2)oo}=(r_{(2)oo})^T \vk^o_+,\quad \vk^o_+ r_{(2)oc}=(r_{(2)co})^T i\vk^c_+,\quad i\vk^c_+ r_{(2)co}=(r_{(2)oc})^T \vk^o_+,\quad \vk^c_+ r_{(2)cc}=(r_{(2)cc})^T \vk^c_+.
	\end{gathered}
\end{equation}
Without loss of generality, we can assume that the rank of the matrix $t_{(1)oo}$ is maximal and is equal to $r_o$. Then it follows from the first relation in \eqref{symmetry_rels_blocks} that the rank of the matrix $t_{(2)oo}$ is also $r_o$.

In this case we have
\begin{equation}
	t_{(1)oo}t^\vee_{(1)oo}=t^\vee_{(2)oo}t_{(2)oo}=1,\qquad t^\vee_{(1)oo}t_{(1)oo}=:L_{(1)},\qquad t_{(2)oo}t^\vee_{(2)oo}=:L_{(2)},
\end{equation}
where $A^\vee$ is a pseudo-inverse matrix to $A$, and $L_{(1)}$ and $L_{(2)}$ are Hermitian projectors of rank $r_o$ in the subspace of open channels at $z\rightarrow-\infty$. It follows from the definition of pseudo-inverse matrix that
\begin{equation}\label{tbarL}
	t_{(1)oo}\bar{L}_{(1)}=\bar{L}_{(1)}t^\vee_{(1)oo}=0,\qquad \bar{L}_{(2)}t_{(2)oo}=t^\vee_{(2)oo}\bar{L}_{(2)}=0,
\end{equation}
where $\bar{L}_{(1,2)}:=1-L_{(1,2)}$. Further, we express the functions $ (F^+_-)_o$ from \eqref{7*eq} and substitute them into \eqref{9*eq}. This gives rise to
\begin{multline}
	(F^+_+)_o +(F^+_+)_c[r^*_{(2)co} -t^*_{(1)co}(t^*_{(1)oo})^\vee r^*_{(2)oo}]=\\
	=(F^-_+)_o[t^*_{(2)oo}-r^*_{(1)oo}(t^*_{(1)oo})^\vee r^*_{(2)oo}] -(F^-_-)_o(t^*_{(1)oo})^\vee r^*_{(2)oo} +(F^-_-)_c[t^*_{(2)co}-r^*_{(1)co}(t^*_{(1)oo})^\vee r^*_{(2)oo}].
\end{multline}
Then multiplying \eqref{7eq} by $t_{(1)oo}^\vee$ and comparing the result with the equation above, we have
\begin{subequations}
	\begin{align}
		t_{(1)oo}^\vee&=t^*_{(2)oo}-r^*_{(1)oo}(t^*_{(1)oo})^\vee r^*_{(2)oo},\label{tv1oo}\\
		r_{(1)oo}L_{(1)}&=-(t^*_{(1)oo})^\vee r^*_{(2)oo} t_{(1)oo},\label{r1oo}\\
		t_{(1)co}L_{(1)}&=[r^*_{(2)co}-t^*_{(1)co}(t^*_{(1)oo})^\vee r^*_{(2)oo}] t_{(1)oo},\\
		r_{(1)co}L_{(1)}&=[t^*_{(2)co}-r^*_{(1)co}(t^*_{(1)oo})^\vee r^*_{(2)oo}] t_{(1)oo}.
	\end{align}
\end{subequations}
We infer from \eqref{r1oo} that
\begin{equation}\label{barLrL}
	\bar{L}^*_{(1)}r_{(1)oo}L_{(1)}=0.
\end{equation}
Then \eqref{tv1oo} implies
\begin{equation}
	\bar{L}_{(1)}t^*_{(2)oo}=0.
\end{equation}
Consequently,
\begin{equation}
	\bar{L}_{(1)}L^*_{(2)}=0\;\Rightarrow\;L_{(1)}L^*_{(2)}=L^*_{(2)}=L^*_{(2)}L_{(1)}.
\end{equation}
As the ranks of $L_{(1)}$ and $L^*_{(2)}$ are the same, we have
\begin{equation}
	L_{(1)}=L^*_{(2)}=:L.
\end{equation}
The first relation in \eqref{symmetry_rels_blocks} implies
\begin{equation}
	t^\vee_{(2)oo}=(\vk_+^o)^{-1}(t^T_{(1)oo})^\vee\vk_-^o L^*,
\end{equation}
whence
\begin{equation}
	L^*=t_{(2)oo}t^\vee_{(2)oo}=(\vk_+^o)^{-1}L^T \vk_+^oL^*.
\end{equation}
Therefore,
\begin{equation}
	\vk_+^oL =L\vk_+^oL= L\vk_+^o.
\end{equation}
Thus we see that $L$ is a diagonal matrix and hence $L^*=L$.

\section{Unitarity in the subspace of open channels}\label{Unitarity_Sec}

Now we are in position to prove unitarity of the $S$-matrix in open channels as well as to obtain the other relations involving the different components of $t_{(1,2)}$ and $r_{(1,2)}$.
\begin{theorem}\label{ccUnitTh}
	The $S$-matrix in the subspace of open channels is unitary.
\end{theorem}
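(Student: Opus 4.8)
The statement is that the open--channel block
\[
	S_{oo}:=\begin{bmatrix} t_{(1)oo} & r_{(2)oo}\\ r_{(1)oo} & t_{(2)oo}\end{bmatrix}
\]
--- a \emph{square} matrix of size $(l_o+r_o)\times(l_o+r_o)$, since $t_{(1)oo}$ is $r_o\times l_o$, $t_{(2)oo}$ is $l_o\times r_o$, $r_{(1)oo}$ is $l_o\times l_o$ and $r_{(2)oo}$ is $r_o\times r_o$ --- is unitary with respect to the weights $\vk^o_\pm$; explicitly,
\begin{gather*}
	t_{(1)oo}^\dag\vk^o_+ t_{(1)oo}+r_{(1)oo}^\dag\vk^o_- r_{(1)oo}=\vk^o_-,\qquad r_{(2)oo}^\dag\vk^o_+ r_{(2)oo}+t_{(2)oo}^\dag\vk^o_- t_{(2)oo}=\vk^o_+,\\
	t_{(1)oo}^\dag\vk^o_+ r_{(2)oo}+r_{(1)oo}^\dag\vk^o_- t_{(2)oo}=0,
\end{gather*}
which, in the notation of the remark after Theorem~\ref{ooUnitTh}, reads $\tilde S_{oo}^\dag\tilde S_{oo}=1$, and, $S_{oo}$ being square, is equivalent to the companion identities obtained by exchanging $S_{oo}\leftrightarrow S_{oo}^\dag$. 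The conceptual content is flux conservation: every column of $S_{oo}$ is the open--channel asymptotic data of a genuine bounded solution of \eqref{Schr_eqn}, whose closed--channel components decay at the appropriate infinity and therefore carry no asymptotic flux, so the $z$--independent Wronskian $w[\psi,\bar\chi]$ of two such solutions, evaluated at $z\to\pm\infty$, equates the incoming and outgoing open--channel fluxes. The work is to extract this algebraically from what is already assembled in Section~\ref{Identities_Open_Chan_Sec}.

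First I would complete the elimination begun there. In addition to \eqref{tv1oo}--\eqref{r1oo}, obtained by eliminating $(F^+_-)_o$ between \eqref{7*eq} and \eqref{9*eq}, I would derive the analogous coefficient relations from the remaining pairs among \eqref{t_r_blocks} and \eqref{7*eq}--\eqref{10*eq}. The point to watch is that the closed--channel Jost blocks $(F^\pm_\pm)_c$ grow exponentially at the corresponding infinity, so one must first form the combinations in which those growing exponentials cancel, and only then equate the coefficients of the linearly independent bounded solutions, namely $\{(F^-_+)_o,(F^-_-)_o,(F^-_-)_c\}$ at $z\to-\infty$ and $\{(F^+_+)_o,(F^+_-)_o,(F^+_+)_c\}$ at $z\to+\infty$ --- just as in the derivation of \eqref{tv1oo}--\eqref{r1oo}, where (because of rank deficiency) only the $L$--parts of the relations could be read off directly. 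The outcome is a closed list of relations expressing $t^\vee_{(2)oo}$, $r_{(2)oo}$, the closed mixed blocks, and their complex conjugates through $t^{*}_{(1)oo}$, $r^{*}_{(1)oo}$, $r^{*}_{(2)oo}$ and the projector $L$.

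The second step is assembly. Using $t_{(1)oo}t^\vee_{(1)oo}=1$, $t^\vee_{(1)oo}t_{(1)oo}=L$, $t_{(2)oo}t^\vee_{(2)oo}=L$, $t^\vee_{(2)oo}t_{(2)oo}=1$, the annihilation relations \eqref{tbarL}, \eqref{barLrL}, the fact (proved in Section~\ref{Identities_Open_Chan_Sec}) that $L$ is a diagonal projector commuting with $\vk^o_\pm$, and the symmetry relations \eqref{symmetry_rels_blocks} (which tie $t_{(1)oo}$ to $t_{(2)oo}$ and make $r_{(1)oo}$ and $r_{(2)oo}$ symmetric up to the $\vk$--weights), I would substitute the Step~1 relations into the three quadratic forms above and simplify. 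On the range of $L$ the pseudo--inverses collapse, relation \eqref{tv1oo} turns $t^\vee_{(1)oo}$ into a combination of $t^{*}_{(2)oo}$, $r^{*}_{(1)oo}$, $r^{*}_{(2)oo}$, and the required equalities fall out after a short manipulation; on the complementary subspace $\bar L$ every block appearing in $S_{oo}$ is annihilated by \eqref{tbarL}, \eqref{barLrL} and their analogues, so both sides of each quadratic form reduce to zero there --- the algebraic shadow of the fact that those directions behave like closed channels and carry no flux. Adding the two pieces yields the full identities; the companion identities then follow by repeating the computation with \eqref{t_r_blocks} and \eqref{7*eq}--\eqref{10*eq} interchanged, or automatically since $S_{oo}$ is square.

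The step I expect to be the main obstacle is the bookkeeping forced by $l_o\neq r_o$ together with the rank deficiency of $t_{(1)oo}$: throughout one must keep straight which blocks live on the range of $L$ and which on its kernel, apply every pseudo--inverse on the correct side, and verify that the ``leftover'' closed--channel blocks $r_{(1)co}$, $t_{(1)co}$, \dots\ --- which do not enter $S_{oo}$ --- really cancel out of the open--channel flux balance rather than contaminating it, the cancellation of the growing exponentials in Step~1 being exactly what guarantees this. A superficially shorter route would start from the third symmetry in \eqref{S_symmetries}, $\bar S^T\diag(K_+,K_-)S=\diag(K_-,K_+)$, noting that complex conjugation agrees with the bar operation on the open--open blocks; but the open--open block of that matrix identity still carries a closed--channel cross term $(\bar S_{co})^T\diag(i\vk^c_+,i\vk^c_-)S_{co}$, and showing that it vanishes is the same work over again.
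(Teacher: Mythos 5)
Your overall route is the paper's own: eliminate the Jost solutions between the block relations \eqref{t_r_blocks} and their complex conjugates, then assemble the weighted quadratic forms using the projector $L$, the pseudo-inverse identities \eqref{tbarL}, \eqref{barLrL} and the symmetry relations \eqref{symmetry_rels_blocks}. The cross relation and the identity on the $r_o$-dimensional side do come out of \eqref{r1oo}, \eqref{tv1oo} and \eqref{symmetry_rels_blocks} essentially as you sketch. But there is one concrete error, and it sits exactly at the delicate point: your claim that on the complementary subspace $\bar L$ ``every block appearing in $S_{oo}$ is annihilated \ldots so both sides of each quadratic form reduce to zero there.'' This is false for the first identity, $t_{(1)oo}^\dag\vk^o_+ t_{(1)oo}+r_{(1)oo}^\dag\vk^o_- r_{(1)oo}=\vk^o_-$. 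The available relations give $t_{(1)oo}\bar L=0$ and $\bar L\,r_{(1)oo}L=0$, but they do \emph{not} kill $r_{(1)oo}\bar L$; and the right-hand side restricted to $\bar L$ is $\vk^o_-\bar L\neq0$, since $L$ is diagonal and commutes with $\vk^o_-$. What happens on $\ker L$ is the opposite of your physical picture: these directions are open channels at $z\to-\infty$ that cannot transmit into any open channel on the right, so they must reflect \emph{all} of their flux, $\bar L\,r^\dag_{(1)oo}\vk^o_- r_{(1)oo}\bar L=\vk^o_-\bar L$ --- total reflection, not ``no flux''; they do not behave like closed channels.

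Consequently the elimination you have at that stage yields only the $L$-projected identity \eqref{unit2}, $t^*_{(2)oo}t_{(1)oo}+r^*_{(1)oo}r_{(1)oo}L=L$, from which the first unitarity relation does not follow. One further elimination is needed to supply the $\bar L$-piece: multiplying \eqref{7*eq} by $\bar L\,r_{(1)oo}\bar L$ from the right and comparing with \eqref{7eq} multiplied by $\bar L$ gives $\bar L=r^*_{(1)oo}\bar L\,r_{(1)oo}\bar L$, which combined with \eqref{unit2} removes the projector and produces $t^*_{(2)oo}t_{(1)oo}+r^*_{(1)oo}r_{(1)oo}=1$, hence \eqref{unit_rels_p}. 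Your Step 1 does say you would ``derive the analogous coefficient relations from the remaining pairs,'' but the relation actually needed is obtained from the \emph{same} pair \eqref{7eq}/\eqref{7*eq} contracted with $\bar L\,r_{(1)oo}\bar L$, and your explicit assertion that the $\bar L$ sector is trivial indicates this step is genuinely absent rather than merely left implicit.
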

\begin{proof}
	It follows from \eqref{r1oo} that
	\begin{equation}\label{unit1}
		t^*_{(1)oo} r_{(1)oo}L+r^*_{(2)oo}t_{(1)oo}=0\;\Rightarrow\; t^*_{(1)oo} r_{(1)oo}+r^*_{(2)oo}t_{(1)oo}=0,
	\end{equation}
	where the properties \eqref{tbarL}, \eqref{barLrL} have been taken into account in the last equality. Given the relation \eqref{unit1}, equality \eqref{tv1oo} implies
	\begin{equation}\label{unit2}
		t^*_{(2)oo}t_{(1)oo}+r^*_{(1)oo}r_{(1)oo}L=L.
	\end{equation}
	Using the symmetry relations \eqref{symmetry_rels_blocks} in \eqref{unit1}, we obtain
	\begin{equation}\label{unit1p}
		r_{(1)oo}t^*_{(2)oo}+t_{(2)oo}r^*_{(2)oo}=0.
	\end{equation}
	Taking complex conjugate of \eqref{unit2} and multiplying the resulting expression by $t^\vee_{(1)oo}$ from the left and by $t_{(1)oo}$ from the right, using \eqref{unit1}, \eqref{unit1p}, we come to
	\begin{equation}\label{unit2p}
		t^*_{(1)oo}t_{(2)oo}+r^*_{(2)oo}r_{(2)oo}=1.
	\end{equation}
	The relations \eqref{unit1}, \eqref{unit2}, \eqref{unit1p}, \eqref{unit2p} lead to the unitarity relations for the $S$-matrix in open channels. Indeed, substituting the symmetry relations \eqref{symmetry_rels_blocks} into these expressions, we have
	\begin{equation}\label{unit_rels}
		\begin{split}
			t^\dag_{(2)oo}\vk^o_-r_{(1)oo}L+r^\dag_{(2)oo}\vk^o_+t_{(1)oo}&=0,\\
			t^\dag_{(1)oo}\vk^o_+ t_{(1)oo} +r^\dag_{(1)oo}\vk^o_-r_{(1)oo}L&=\vk^o_-L,\\
			r^\dag_{(1)oo}\vk^o_-t_{(2)oo}+t^\dag_{(1)oo}\vk^o_+r_{(2)oo}&=0,\\
			t^\dag_{(2)oo}\vk^o_-t_{(2)oo} +r^\dag_{(2)oo} \vk^o_+r_{(2)oo}&=1,
		\end{split}
	\end{equation}
	where the complex conjugate equation \eqref{unit1p} has been used in the third equality.
	
	There are also the relations for the matrix $r_{(1)oo}$. In order to deduce them, multiply \eqref{7*eq} by $\bar{L}r_{(1)oo}\bar{L}$ from the right and compare with \eqref{7eq} multiplied by $\bar{L}$ from the right. As a result, we have
	\begin{equation}
		t_{(1)co}\bar{L}=t^*_{(1)co}\bar{L}r_{(1)oo}\bar{L},\qquad r_{(1)co}\bar{L}=r^*_{(1)co}\bar{L}r_{(1)oo}\bar{L},\qquad \bar{L}=r^*_{(1)oo}\bar{L}r_{(1)oo}\bar{L}.
	\end{equation}
	Combining the last relation with \eqref{unit2}, we find
	\begin{equation}
		t^*_{(2)oo}t_{(1)oo}+r^*_{(1)oo}r_{(1)oo}=1.
	\end{equation}
	As for the second relation in \eqref{unit_rels}, it is written as
	\begin{equation}\label{unit_rels_p}
		t^\dag_{(1)oo}\vk^o_+ t_{(1)oo} +r^\dag_{(1)oo}\vk^o_-r_{(1)oo}=\vk^o_-.
	\end{equation}
	The relations \eqref{unit_rels}, \eqref{unit_rels_p} are nothing but the unitarity relations for the $S$-matrix \eqref{S_unitar_allopen} in the subspace of open channels.
\end{proof}

There are also the additional relations connecting the components $t_{(1,2)}$ and $r_{(1,2)}$ of closed and open channels.
\begin{proposition}
	The following relations hold
	\begin{equation}
		\begin{split}
			t_{(1)cc}&=t^*_{(1)cc}-r_{(2)co}t^*_{(1)oc}-t_{(1)co}r^*_{(1)oc},\\
			r_{(1)cc}&=r^*_{(1)cc}-t_{(2)co}t^*_{(1)oc} -r_{(1)co}r^*_{(1)oc},\\
			t_{(1)co}&=t^*_{(1)co}r_{(1)oo}+r^*_{(2)co}t_{(1)oo},\\
			r_{(1)co}&=t^*_{(2)co}t_{(1)oo}+r^*_{(1)co}r_{(1)oo},\\
			t_{(1)oc}&=-r_{(2)oo}t^*_{(1)oc}-t_{(1)oo}r^*_{(1)oc},\\
			r_{(1)oc}&=-t_{(2)oo}t^*_{(1)oc}-r_{(1)oo}r^*_{(1)oc}.
		\end{split}
	\end{equation}
	\begin{remark}
		There are also the relations obtained from these ones by replacing $1\leftrightarrow2$.
	\end{remark}
\end{proposition}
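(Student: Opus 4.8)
The plan is to expand every Jost solution occurring in the complex-conjugated block equations \eqref{7*eq} and \eqref{8*eq} in one fixed basis of the solution space of \eqref{Schr_eqn}, and then to equate coefficients. The basis I would use consists of the $2N$ solutions forming $F^+_+$ and $F^-_-$, that is, the column blocks $(F^+_+)_o$, $(F^+_+)_c$, $(F^-_-)_o$, $(F^-_-)_c$. That these are linearly independent follows from \eqref{basis_expan} together with \eqref{basic_wronsk}, which give $w[F^+_+,F^-_-]=-2i\Phi_+^T K_-$; since the Wronskian form is nondegenerate, $\{F^+_+,F^-_-\}$ is a basis precisely when $\Phi_+$ is invertible, i.e. for every real $\la$ at which the $S$-matrix is defined.

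First I would rewrite the $t_{(1)}$- and $t_{(2)}$-blocks of \eqref{t_r_def} as $F^-_+=F^+_+t_{(1)}-F^-_-r_{(1)}$ and $F^+_-=F^-_-t_{(2)}-F^+_+r_{(2)}$; in block form these express $(F^-_+)_o$ and $(F^-_+)_c$ (from \eqref{7eq}, \eqref{8eq}) and $(F^+_-)_o$ (from the third equation in \eqref{t_r_blocks}) directly in the chosen basis. Substituting these expressions into \eqref{8*eq} turns it into a linear identity among $(F^+_+)_o$, $(F^+_+)_c$, $(F^-_-)_o$, $(F^-_-)_c$; linear independence then forces the matrix coefficient of each of the four blocks to vanish, and the four equations so obtained are exactly the stated relations for $t_{(1)oc}$, $t_{(1)cc}$, $r_{(1)oc}$, $r_{(1)cc}$. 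Carrying out the same substitution in \eqref{7*eq} and comparing the four coefficients yields, after taking complex conjugates, the remaining two relations --- for $t_{(1)co}$ and $r_{(1)co}$ --- together with two of the open-channel unitarity relations already established in the proof of Theorem \ref{ccUnitTh}, which serves as a consistency check. The relations with $1\leftrightarrow 2$ of the Remark come from the same computation applied to the conjugated $t_{(2)}$-, $r_{(2)}$-block equations \eqref{9*eq} and \eqref{10*eq}, expanded over the same basis $\{F^+_+,F^-_-\}$, the two groups of channel indices playing symmetric roles.

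The point that most needs care is not any single hard step but the choice of arena: one must expand over the whole $2N$-dimensional solution space, so the spanning set is $\{F^+_+,F^-_-\}$ and not merely their open-channel parts, and one has to note that it is genuinely a basis. This fails only at bound states, where $\det\Phi_+=0$ and the entries of the $S$-matrix are undefined, so the Proposition is vacuous there. In contrast with the more delicate parts of the proof of Theorem \ref{ccUnitTh}, no pseudo-inverse of $t_{(1)oo}$ and no rank assumption are needed here: every relation emerges from a substitution followed by comparison of coefficients, and nothing is ever divided by $t_{(1)oo}$. A minor secondary point is to keep the open/closed split in the conjugation rules $(F^+_+)^*_o=(F^+_-)_o$, $(F^+_+)^*_c=(F^+_+)_c$ (and their analogues for $F^-_\pm$) consistent with the way \eqref{7*eq} and \eqref{8*eq} are written, but this is already fixed in the text.
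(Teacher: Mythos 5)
Your proof is correct, and I checked the coefficient comparison explicitly: expanding \eqref{8*eq} over $\{(F^+_+)_o,(F^+_+)_c,(F^-_-)_o,(F^-_-)_c\}$ yields precisely the four relations for $t_{(1)oc}$, $t_{(1)cc}$, $r_{(1)oc}$, $r_{(1)cc}$, while \eqref{7*eq} yields the conjugates of the $t_{(1)co}$ and $r_{(1)co}$ relations together with the conjugates of \eqref{unit1} and of $t^*_{(2)oo}t_{(1)oo}+r^*_{(1)oo}r_{(1)oo}=1$ as consistency checks. Your route differs from the paper's in a way worth noting. The paper extracts only $(F^-_+)_o$ and $(F^+_-)_o$, and does so from a mixture of conjugated and unconjugated block equations weighted by the projectors $L$ and $\bar L$ (i.e.\ through the pseudo-inverse machinery set up in Sec.~\ref{Identities_Open_Chan_Sec}), then substitutes into \eqref{10*eq} and \eqref{8*eq} and compares with \eqref{10eq} and \eqref{8eq}. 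You instead express all of $F^-_+$ and $F^+_-$ directly from the unconjugated definitions \eqref{t_r_def} and compare coefficients in the single fixed basis $\{F^+_+,F^-_-\}$, whose linear independence you correctly reduce, via $w[F^+_+,F^-_-]=-2i\Phi_+^TK_-$ and the nondegeneracy of the Wronskian pairing, to $\det\Phi_+\neq0$ --- a condition already required for the $S$-matrix entries to exist. What your version buys is the elimination of the rank assumption on $t_{(1)oo}$ and of the projectors $L$, $\bar L$ from this particular proposition, making it logically independent of Sec.~\ref{Identities_Open_Chan_Sec}; what the paper's version buys is that it reuses machinery already in place and keeps the manipulations confined to the open-channel blocks. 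Both are valid; yours is the more self-contained and, in my view, the more transparent argument.
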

\begin{proof}
	To deduce these relations, we express $(F^-_+)_o$ and $(F^+_-)_o$ from \eqref{7eq} times $\bar{L}$, \eqref{7*eq} times $L$, and \eqref{9*eq}. Then we substitute these Jost solutions into \eqref{10*eq} and compare the result with \eqref{10eq}. Further, we also substitute $(F^-_+)_o$ and $(F^+_-)_o$ found in this way into \eqref{8*eq} and compare the result with \eqref{8eq}. Having carried out this, we arrive at the relations from the statement of the proposition.
\end{proof}

\section{Bound states}\label{Bouns_States_Sec}

Let $\det\Phi_+(\la)=0$ for some $\la\in \mathbb{C}$. Then
\begin{equation}\label{Phi_zeroeigen}
	\exists v(\la)\neq0, w(\la)\neq0:\qquad\Phi_+(\la)v(\la)=0,\qquad w^T(\la)\Phi_+(\la)=0.
\end{equation}
In a general position, the rank of $\Phi_+(\la)$ drops by one at the given point $\la$. Therefore, we can assume that the conditions \eqref{Phi_zeroeigen} determine the vectors $v$ and $w$ uniquely up to multiplication by a constant. It follows from the first relation in \eqref{PhiPsi_rels} that
\begin{equation}
	\Phi_+^T K_-\Psi_+v=0\;\Rightarrow\;\Psi_+v=K_-^{-1}w.
\end{equation}
By multiplying the first relation in \eqref{basis_expan} by $v$ from the right, we obtain a particular solution to Eq. \eqref{Schr_eqn} of the form
\begin{equation}\label{bound_sol}
	F^+_+v=F^-_-\Psi_+v=F^-_-K_-^{-1}w.
\end{equation}
Given the asymptotic behavior of Jost solutions \eqref{Jost_asympt}, this solution decreases exponentially for $z\rightarrow\pm\infty$, provided that $\arg\sqrt{\La_s^\pm-\la}\in(0,\pi)$ for all $s$, whereas it increases exponentially for $z\rightarrow\pm\infty$ provided that $\arg\sqrt{\La_s^\pm-\la}\in(-\pi,0)$ for all $s$. Note that for the branch of the root we have chosen, the values of the argument  $(-\pi,-\pi/2)\cup(\pi/2,\pi)$ correspond to the second sheet of the Riemann surface. Since Eq. \eqref{Schr_eqn} is a spectral problem for a self-adjoint operator, this equation cannot possess square-integrable solutions for $\la\not\in\mathbb{R}$. Hence $\det\Phi_+(\la)\neq0$ when $\la\not\in\mathbb{R}$ and $\arg\sqrt{\La_s^\pm-\la}\in(0 ,\pi)$ for all $s$.

In virtue of the restrictions \eqref{finite_supp} on the asymptotic behavior of $g_{ij}(z)$ and $V_{ij}(z)$, the bound solutions to Eq. \eqref{Schr_eqn} tend exponentially to zero when $z\rightarrow\pm\infty$. Therefore, as the Jost solutions $F^+_\pm$ and $F^-_\pm$  constitute bases in the space of solutions to Eq. \eqref{Schr_eqn}, the condition \eqref{Phi_zeroeigen} is a necessary and sufficient condition for the existence of a bound state provided $\la\in \mathbb{R}$ and $v_o=0$ and $w_o=0$. The corresponding bound state is given by formula \eqref{bound_sol}.

The following obvious assertion is valid.
\begin{proposition}
	If all scattering channels are open, there are no bound states.
\end{proposition}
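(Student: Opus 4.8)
The plan is to read this off directly from the bound-state criterion obtained just above. Recall what that criterion says: for $\la\in\R$ a bound state exists if and only if $\det\Phi_+(\la)=0$ and the null vectors $v,w$ of \eqref{Phi_zeroeigen} have no components in the open channels, $v_o=0$ and $w_o=0$, the bound state then being given by \eqref{bound_sol}. So the first thing I would do is recall why these vanishing conditions are forced. By \eqref{Jost_asympt} the candidate solution \eqref{bound_sol}, $F^+_+v=F^-_-K_-^{-1}w$, behaves as $f_+e^{iK_+z}v$ for $z\to+\infty$ and as $f_-e^{-iK_-z}K_-^{-1}w$ for $z\to-\infty$; in an open channel $(K_\pm)_s$ is real and nonzero, so the relevant exponential oscillates rather than decays, and square-integrability forces $v$ to vanish in the channels open at $+\infty$ and $w$ to vanish in the channels open at $-\infty$.

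Next I would specialize to the situation of the proposition. If every scattering channel is open at the (necessarily real) value of $\la$ under consideration, i.e. $(K_\pm)_s$ is real and nonzero for all $s=\overline{1,N}$ (equivalently $\la<\La_s^\pm$ for every $s$), then there are no closed channels at either infinity: the open/closed block decomposition of Sections \ref{Identities_Open_Chan_Sec} and \ref{Unitarity_Sec} collapses to a single open block, and the conditions $v_o=0$, $w_o=0$ become simply $v=0$, $w=0$. This contradicts $v\neq0$, $w\neq0$ in \eqref{Phi_zeroeigen}. Hence $\det\Phi_+(\la)\neq0$ for every such $\la$, and by the criterion above no bound state is possible; since, as remarked before the proposition, bound states can live only at real $\la$, this completes the argument.

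I do not expect a genuine obstacle here, as the statement is a corollary of the preceding development. The only points to handle with a little care are pinning down the meaning of ``all channels open'' — namely $\la$ real and strictly below every threshold $\La_s^\pm$ at both infinities — and noting that in this regime the entire block structure used in the closed-channel analysis is trivial. An alternative route would be to invoke Theorem \ref{ooUnitTh}: in this regime the $S$-matrix \eqref{S_unitar_allopen} is unitary, which forces $t_{(1)}=\Phi_+^{-1}$ to exist and hence $\det\Phi_+(\la)\neq0$; but since the very definition of $S$ already presupposes the invertibility of $\Phi_+$, the criterion-based argument above is the cleaner and more self-contained one.
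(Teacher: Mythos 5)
Your argument is correct, but it takes a genuinely different route from the paper's. The paper does not reason through the bound-state criterion at all: it assumes $\det\Phi_+(\la)=0$, takes the null vector $v$, sandwiches the Wronskian identity $\Phi^T_+K_-\Phi_--\Psi^T_+K_-\Psi_-=K_+$ from \eqref{PhiPsi_rels} between $v^T$ and $v^*$, and uses the reality property \eqref{PhiPsi_cc_regul} (which gives $\Psi_-v^*=(\Psi_+v)^*$ on the real axis off the cuts) to obtain $-( \Psi_+v)^TK_-(\Psi_+v)^*=v^TK_+v^*$ — a negative-semidefinite quantity equal to a positive-definite one — forcing $v=0$. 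This establishes the stronger fact that $\det\Phi_+(\la)\neq0$ whenever all channels are open, which is also what makes $t_{(1)}=\Phi_+^{-1}$ and hence the $S$-matrix well defined in that regime. Your route — a bound state must decay, square-integrability kills the open-channel components of $v$ and $w$, and with every channel open nothing survives — is more elementary and does prove the proposition as stated. But note one overreach: your conclusion ``hence $\det\Phi_+(\la)\neq0$'' does not follow from your argument. The conditions $v_o=0$, $w_o=0$ are consequences of assuming a \emph{bound state} exists, not of $\det\Phi_+=0$ alone; your reasoning leaves open the possibility of a singular $\Phi_+$ whose null vector has nonzero open-channel components, producing an oscillating non-normalizable solution rather than a bound state. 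Only the paper's positivity argument excludes that. So state your conclusion as ``no bound state exists,'' not as the invertibility of $\Phi_+$.
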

\begin{proof}
	Let $\la\in \mathbb{R}$ be such that all the scattering channels are open, i.e., all $\sqrt{\La_s^\pm-\la}$ are real. Then, multiplying the second relation in \eqref{PhiPsi_rels} by $v^T$ from the left and by $v^*$ from the right, we arrive at
	\begin{equation}
		-v^T\Psi_+^TK_-\Psi_-v^*=-v^T\Psi_+^TK_-(\Psi_+v)^*=v^TK_+v^*.
	\end{equation}
	The expression on the left-hand side is negative-definite, while the expression on the right-hand side is positive-definite. Therefore, $v=0$.
\end{proof}

Now let $\la\in \mathbb{R}$ be such that some of the scattering channels are closed, as described before formulas \eqref{t_r_blocks}, and the condition \eqref{Phi_zeroeigen} is satisfied. Then
\begin{proposition}
	 The particular solution \eqref{bound_sol} includes only those Jost solutions that correspond to closed channels, i.e., $v_o=0$ and $w_o=0$. This is a bound state.
\end{proposition}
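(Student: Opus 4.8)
The plan is to work with the particular solution $\psi:=F^+_+v=F^-_-K_-^{-1}w$ of \eqref{bound_sol} and to compute, in two different ways, the $z$-independent Wronskian $w[\bar\psi,\psi]$ of $\psi$ with its complex conjugate. Since $g_{ij}(z)$ and $V_{ij}(z)$ are real and $\la\in\R$, the conjugate $\bar\psi$ again solves \eqref{Schr_eqn}, so $w[\bar\psi,\psi]$ is independent of $z$, and from the symmetry of $g(z)$ it equals $2i\,\im\big(\psi^\dag g(z)\partial_z\psi\big)$, i.e.\ it is purely imaginary.

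First I would evaluate this constant using the representation $\psi=F^+_+v$ and the asymptotics \eqref{Jost_asympt} with $f_+^Tg_+f_+=1$, taking $z\to+\infty$. Splitting the channel index $s$ at $+\infty$ into open and closed parts and writing $v=(v_o,v_c)$, the closed-channel entries, for which $(K_+)_s=i(\vk^c_+)_s$, produce terms proportional to $e^{-2(\vk^c_+)_sz}$ that are purely real and hence invisible to $\im(\cdot)$, whereas the open channels give $w[\bar\psi,\psi]=2i\,(v_o)^\dag\vk^o_+v_o$. Carrying out the same computation at $z\to-\infty$ with $\psi=F^-_-K_-^{-1}w$ and using that $K_-$ is diagonal, the closed channels again contribute only real terms (now proportional to $e^{2(\vk^c_-)_sz}$), while the open channels yield $w[\bar\psi,\psi]=-2i\,(w_o)^\dag(\vk^o_-)^{-1}w_o$. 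Equating the two expressions, the left-hand side is a nonnegative-definite quadratic form in $v_o$ and the right-hand side a nonpositive-definite one in $w_o$, so both vanish: $v_o=0$ and $w_o=0$.

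With $v_o=0$ the solution $\psi=F^+_+v$ is a linear combination of only those columns $(F^+_+)_{\cdot s}$ with $s$ a closed channel at $z\to+\infty$, each decaying like $e^{-(\vk^c_+)_sz}$; and with $w_o=0$, since $K_-$ is diagonal, $\psi=F^-_-K_-^{-1}w$ is a combination of only those $(F^-_-)_{\cdot s}$ with $s$ closed at $z\to-\infty$, each decaying like $e^{(\vk^c_-)_sz}$ as $z\to-\infty$. Hence $\psi$ decays exponentially at both infinities, so $\psi\in L^2(\R)$, and $\psi\neq0$ since the columns of $F^+_+$ are linearly independent and $v\neq0$. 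Therefore $\psi$ is a genuine bound state, which is exactly the assertion, in agreement with the necessary and sufficient condition stated before the proposition.

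The main obstacle is the sign bookkeeping for the branches of $\sqrt{\La^\pm_s-\la}$ and the factors $e^{\pm i(K_\pm)_sz}$ for closed versus open channels: one must be careful enough to verify that (i) the closed-channel contributions to $\psi^\dag g\,\partial_z\psi$ at each infinity are real, so they drop out of $w[\bar\psi,\psi]$, and (ii) the surviving open-channel forms have the opposite definiteness used above. It is also worth confirming at the start that on the closed channels $F^+_+$ is the Jost solution decaying as $z\to+\infty$ (not $F^+_-$) and $F^-_-$ the one decaying as $z\to-\infty$, since this is what makes $\psi$ square-integrable once $v_o=w_o=0$; this is immediate from \eqref{Jost_asympt} and the convention $(K_\pm)_c=i\vk^c_\pm$ with $\vk^c_\pm>0$.
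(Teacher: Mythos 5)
Your proposal is correct, and it takes a genuinely different route from the paper. The paper partitions $\Phi_+$ into blocks adapted to the open/closed splitting, identifies $t^{-1}_{(1)11}$ with a Schur complement of $\Phi_+$, shows that $v_1,w_1$ would be null vectors of $t^{-1}_{(1)11}$, and then invokes the boundedness of $t_{(1)oo}$ guaranteed by the unitarity relation \eqref{unit_rels_p} to force $v_o=0$; the remaining piece $w_o^T\bar L=0$ is extracted from further identities on the blocks of $\Phi_+$. You instead run the classical flux-conservation argument: since $g$, $V$ are real and $\la\in\R$, the conjugate $\psi^*$ of the candidate solution $\psi=F^+_+v=F^-_-K_-^{-1}w$ again solves \eqref{Schr_eqn}, so $w[\psi^*,\psi]=2i\im(\psi^\dag g\,\partial_z\psi)$ is a $z$-independent constant; evaluating it exactly in the regions $z>L_z$ and $z<-L_z$ (where the asymptotics \eqref{Jost_asympt} are exact and $f_\pm^Tg_\pm f_\pm=1$ kills all cross terms, $K_\pm$ being diagonal) gives $2i\,v_o^\dag\vk^o_+v_o=-2i\,w_o^\dag(\vk^o_-)^{-1}w_o$, and the opposite definiteness forces $v_o=w_o=0$. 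Your sign bookkeeping checks out, including the reality and decay of the closed-channel contributions and the decay of the closed-channel columns of $F^+_+$ at $+\infty$ and of $F^-_-$ at $-\infty$. What each approach buys: yours is more elementary and logically independent of Theorem \ref{ccUnitTh} (it uses only the Wronskian identity and the exact asymptotics), which would even let one reverse the logical order of Secs. \ref{Unitarity_Sec} and \ref{Bouns_States_Sec}; the paper's argument, by contrast, exhibits the mechanism inside the transition matrix itself (the Schur-complement structure of $\Phi_+$ and the boundedness of $t_{(1)oo}$), which meshes with the surrounding identities for $L$, $\bar L$ and the closed-channel blocks.
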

\begin{proof}
	Partition the matrix $\Phi_+$ into blocks
	\begin{equation}
		\Phi_+=
		\left[
		\begin{array}{cc}
			(\Phi_+)_{11} & (\Phi_+)_{12} \\
			(\Phi_+)_{21} & (\Phi_+)_{22} \\
		\end{array}
		\right],
	\end{equation}
	where the block $(\Phi_+)_{11}$ has dimensions $r_o\times r_o$ and acts from the open channels on the right to the open channel subspace on the left, this subspace being distinguished by the projector $L$. Then there are the relations
	\begin{equation}\label{v_w_rels}
		\begin{aligned}
			(\Phi_+)_{11}v_1+(\Phi_+)_{12}v_2&=0,&\qquad (\Phi_+)_{21}v_1+(\Phi_+)_{22}v_2&=0,\\
			w_1^T(\Phi_+)_{11}+w_2^T(\Phi_+)_{21}&=0,&\qquad w_1^T(\Phi_+)_{12}+w_2^T(\Phi_+)_{22}&=0,
		\end{aligned}
	\end{equation}
	and
	\begin{equation}
		t^{-1}_{(1)11}=(\Phi_+)_{11}-(\Phi_+)_{12}(\Phi_+)^{-1}_{22}(\Phi_+)_{21}.
	\end{equation}
	Acting on the last expression by $w_1^T$ and $v_1$ from the left and from the right, respectively, and employing the relations \eqref{v_w_rels}, we see that $w_1$ and $v_1$ are the left and right null vectors of $t^{- 1}_{(1)11}$. However, the unitarity relation \eqref{unit_rels_p} implies that $t_{(1)11}=t_{(1)oo}L$ is a bounded operator. Hence $w_1=Lw_o=0$ and $v_1=v_o=0$.
	
	It follows from the first relation in \eqref{t1_r1_t2_r2_def} that
	\begin{equation}
		(\Phi_+)_{oo}t_{(1)oo}+(\Phi_+)_{oc}t_{(1)co}=1,\qquad (\Phi_+)_{co}t_{(1)oo}+(\Phi_+)_{cc}t_{(1)co}=0.
	\end{equation}
	Whence, multiplying both the equalities by $\bar{L}$ from the right, we obtain
	\begin{equation}\label{aux_rels}
		(\Phi_+)_{oc}t_{(1)co}\bar{L}=\bar{L},\qquad (\Phi_+)_{cc}t_{(1)co}\bar{L}=0.
	\end{equation}
	Since
	\begin{equation}\label{w_Phi}
		w_o^T(\Phi_+)_{oo}+w_c^T(\Phi_+)_{co}=0,\qquad w_o^T(\Phi_+)_{oc}+w_c^T(\Phi_+)_{cc}=0,
	\end{equation}
	multiplying the first relation in \eqref{aux_rels} by $w_o^T$ from the left and using the second relation in \eqref{w_Phi}, we arrive at
	\begin{equation}
		w_o^T\bar{L}=-w_c^T(\Phi_+)_{cc}t_{(1)co}\bar{L}=0,
	\end{equation}
	where the last equality follows from the second relation in \eqref{aux_rels}. Thus, we have proved that
	\begin{equation}
		v_o=0,\qquad w_o=0.
	\end{equation}
	As a result, both the left- and right-hand sides of equality \eqref{bound_sol} defining a particular solution to Eq. \eqref{Schr_eqn} include only those Jost solutions that correspond to the closed scattering channels. Therefore, this particular solution decreases exponentially as $z\rightarrow\pm\infty$ and so it is a bound state. Notice that $v_c\neq0$ and $w_c\neq0$ since otherwise $v=0$ and $w=0$ in contradiction with \eqref{Phi_zeroeigen}.
\end{proof}

Thus we have proved the theorem
\begin{theorem}
	The following condition
	\begin{equation}
		\det\Phi_+(\la)=0,\quad\la\in \mathbb{R},
	\end{equation}
	is a necessary and sufficient condition for the existence of bound states of Eq. \eqref{Schr_eqn}.
\end{theorem}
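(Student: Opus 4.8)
The plan is to assemble the theorem from the three propositions of this section together with the preliminary discussion around \eqref{Phi_zeroeigen}--\eqref{bound_sol}; the genuinely new ingredient is a short basis argument for the necessity direction, everything else having already been proved.

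\emph{Sufficiency.} Suppose $\det\Phi_+(\la)=0$ for some $\la\in\mathbb{R}$, so that nonzero vectors $v$, $w$ exist as in \eqref{Phi_zeroeigen}. First I would exclude the case in which all scattering channels are open: the computation in the proof of the proposition ``if all scattering channels are open, there are no bound states'' yields $v=0$ in that situation, contradicting \eqref{Phi_zeroeigen}. Hence at least one channel is closed for $z\to-\infty$ and/or $z\to\infty$, and the last proposition of the section applies without change, giving $v_o=0$ and $w_o=0$. Consequently the particular solution $F^+_+v=F^-_-K_-^{-1}w$ of \eqref{bound_sol} is built only from closed-channel Jost solutions at both infinities; by the asymptotics \eqref{Jost_asympt} and the support condition \eqref{finite_supp} it decays exponentially as $z\to\pm\infty$, so it is square-integrable and is a bound state.

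\emph{Necessity.} Conversely, let $u$ be a bound state, i.e.\ a nonzero square-integrable solution of \eqref{Schr_eqn}. Since \eqref{Schr_eqn} is the eigenvalue problem of a self-adjoint operator with the positive weight $g$, necessarily $\la\in\mathbb{R}$. I would then expand $u$ in the Jost basis $\{F^+_+,F^+_-\}$ and, independently, in $\{F^-_+,F^-_-\}$. From \eqref{Jost_asympt}, for $\la\in\mathbb{R}$ every open channel contributes a purely oscillating exponential (hence is not square-integrable), so all open-channel coefficients vanish; among the closed channels, $(F^+_-)_c$ grows and $(F^+_+)_c$ decays as $z\to\infty$, while $(F^-_+)_c$ grows and $(F^-_-)_c$ decays as $z\to-\infty$. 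Therefore $u=F^+_+v$ with $v_o=0$, $v\neq0$, and at the same time $u$ has vanishing $F^-_+$-component. Substituting this $v$ into the first relation in \eqref{basis_expan} gives $F^+_+v=F^-_+\Phi_+v+F^-_-\Psi_+v$, and since $\{F^-_+,F^-_-\}$ is a basis the absence of the $F^-_+$-component forces $\Phi_+v=0$ with $v\neq0$; hence $\det\Phi_+(\la)=0$. (As a by-product $(\Psi_+v)_o=0$, consistent with $w_o=0$.)

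\emph{Main obstacle.} The delicate point is the asymptotic bookkeeping in the necessity part: one must verify that square-integrability forces \emph{every} open-channel component of the expansion to vanish identically, and must dispose of the borderline values $\arg\sqrt{\La^\pm_s-\la}\in\{0,\pi\}$, where a channel sits exactly at a threshold and the asymptotics are no longer exponential; these are excluded by the genericity assumption already in force (or handled by a limiting argument, as announced after \eqref{Jost_asympt}). I would also remark that if the rank of $\Phi_+(\la)$ drops by more than one, the same argument produces as many linearly independent bound states as $\dim\ker\Phi_+(\la)$, and that the degenerate-threshold configurations follow by passing to the respective limit.
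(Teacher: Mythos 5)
Your proposal is correct and follows essentially the same route as the paper: sufficiency is assembled from the two propositions of the section (the all-open case is excluded by the Wronskian positivity argument, and the closed-channel case gives $v_o=w_o=0$ so that \eqref{bound_sol} decays at both infinities), while necessity is the basis/asymptotics argument that the paper states tersely in the paragraph before the first proposition. Your version merely spells out the necessity bookkeeping (vanishing of open-channel and growing closed-channel coefficients, then $\Phi_+v=0$ from the absence of the $F^-_+$ component) in more detail than the paper does.
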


\section{Shortwave asymptotics}\label{Shartwave_Asypt_Sec}

Notice that
\begin{equation}
	\det S(\la)=\frac{\det\Phi_-(\la)}{\det\Phi_+(\la)}=\frac{\det\tilde{\Phi}_-(\la)}{\det\tilde{\Phi}_+(\la)}=\det\tilde{S}(\la).
\end{equation}
Let us show that $\det\Phi_\pm(\la)\rightarrow1$ and $\det\tilde{\Phi}_\pm(\la)\rightarrow1$ as $|\la|\rightarrow\infty$ and so $\det S(\la)\rightarrow1$ and $\det \tilde{S}(\la)\rightarrow1$ in this limit. Let us find the explicit expressions for the Jost solutions
\begin{equation}
	\tilde{F}^+_\pm:=F^+_\pm K_+^{-1/2},\qquad \tilde{F}^-_\pm:=F^-_\pm K_-^{-1/2},
\end{equation}
when $|\la|\rightarrow\infty$. In this limit, one can use the semiclassical matrix approximation \cite{BabBuld,Maslov,BagTrif,RMKD18} to obtain a solution of Eq. \eqref{Schr_eqn}. We are looking for a solution of Eq. \eqref{Schr_eqn} in the form
\begin{equation}
	f_{is}(z)e^{iS_s(z)}\qquad (\text{no summation over $s$}).
\end{equation}
Then, in leading order, we have \cite{BabBuld,BKKL21}
\begin{equation}
	g_{ij}(z)f_{js}(z)\La_s(z)=V_{ij}(z)f_{js}(z)\qquad (\text{no summation over $s$}),
\end{equation}
where $\La_s(z)\in \mathbb{R}$ are eigenvalues and
\begin{equation}
	K_s(z)=S'_s(z)=\sqrt{\La_s(z)-\la}, \qquad f^T(z)g(z)f(z)=1.
\end{equation}
The following relations are also fulfilled
\begin{equation}
	[K_s(z) f^\dag_s(z)g(z)f_s(z)]'=0,\qquad\im[f^\dag_s(z)g(z)f'_s(z)]=0.
\end{equation}
As a result, we obtain the semiclassical expressions for the Jost solutions
\begin{equation}\label{Jost_semiclas}
	(\tilde{F}^+_\pm)_{is}(z)=\frac{f_{is}(z)}{K^{1/2}_s(z)}e^{\pm iS^+_s(z)},\qquad (\tilde{F}^-_\pm)_{is}(z)=\frac{f_{is}(z)}{K^{1/2}_s(z)}e^{\pm iS^-_s(z)},
\end{equation}
where
\begin{equation}
	S^+_s(z)=(K_+)_sL_z-\int_z^{L_z} dz'K_s(z'),\qquad S^-_s(z)=\int_{-L_z}^z dz'K_s(z')-(K_-)_sL_z,
\end{equation}
and $L_z$ was defined in \eqref{finite_supp}. One can replace $L_z$ in the expressions for $S_s^\pm(z)$ by any real number greater than or equal to $L_z$.

Using the first relation in \eqref{PhiPsiWronsk}, we derive
\begin{equation}
	2i(\tilde{\Phi}_+)_{ss'}=w[(\tilde{F}^-_-)_s,(\tilde{F}^+_+)_{s'}]=2i\de_{ss'}e^{i(S_s^+-S_s^-)}.
\end{equation}
In the last equality, the semiclassical expressions \eqref{Jost_semiclas} for the Jost solutions have been employed and
\begin{equation}
	S^+_s(z)-S^-_s(z)=[(K_+)_s +(K_-)_s]L_z-\int_{-L_z}^{L_z}dz' K_s(z').
\end{equation}
It is evident from the explicit expressions for $K_\pm(\la)$ and $K(\la)$ that
\begin{equation}
	S^+_s(z)-S^-_s(z)\underset{|\la|\rightarrow\infty}{\rightarrow}0.
\end{equation}
Therefore,
\begin{equation}
	(\tilde{\Phi}_+)_{ss'}\underset{|\la|\rightarrow\infty}{\rightarrow}\de_{ss'},
\end{equation}
and
\begin{equation}\label{detPhi}
	\det\tilde{\Phi}_+(\la)\underset{|\la|\rightarrow\infty}{\rightarrow}1.
\end{equation}
As long as $\tilde{\Phi}_-(\la)=\tilde{\Phi}^*_+(\la^*)$ outside the cuts, the asymptotics \eqref{detPhi} is also valid for $\det\tilde {\Phi}_-(\la)$. It is clear that the same asymptotics hold for $\Phi_\pm(\la)$.

\section{Conclusion}

Let us summarize the results. We have considered a multichannel stationary scattering problem for the one-dimensional Schr\"{o}dinger equation with a potential $V_{ij}(z)$ and an inverse mass matrix $g_{ij}(z)$, where $z\in \mathbb{R}$. The matrices $V_{ij}(z)$ and $g_{ij}(z)$ are assumed to be real, symmetric, and equal to constant values for sufficiently large $|z|$. Their matrix elements are supposed to be piecewise continuous functions. The matrix $g_{ij}(z)$ is assumed to be positive-definite. We consider the general case and do not suppose that the asymptotics of the matrices $V_{ij}(z)$ and $g_{ij}(z)$ for $z\rightarrow-\infty$ and for $z\rightarrow\infty$ coincide.

The analytical structure of the Jost solutions $(F^+_\pm)_{is}$ and $(F^-_\pm)_{is}$ as functions of the auxiliary spectral parameter $\la$ has been investigated. It has been shown that the Jost solutions $(F^+_\pm)_{is}$ are the different branches of the same vector-valued analytic function on a double-sheeted Riemann surface. The same is true for $(F^-_\pm)_{is}$ but with different branch points. It has also been shown that the matrix-valued functions that form the transition matrix between the bases $(F^+_\pm)_{is}$ and $(F^-_\pm)_{is}$ are the different branches of the same analytic matrix-valued function.

The multichannel scattering matrix has been investigated. The key result of this paper is the Theorem \ref{ccUnitTh} that proves  unitarity of the scattering matrix in the subspace of open channels. One would expect that the scattering matrix should be unitary in the subspace of open channels on physical grounds. It is also clear that in the general case the complete $S$-matrix is not unitary in the presence of closed channels. Nevertheless, the proof of this fact is nontrivial and appears to be obtained in this paper for the first time. A weak version of this theorem, viz., the statement about unitarity of the $S$-matrix in the case when all the scattering channels are open, has also been proved (Theorem \ref{ooUnitTh}). The proof of the latter statement is known in the literature in the case when the asymptotics of the matrices $g_{ij}(z)$ and $V_{ij}(x)$ as $z\rightarrow\pm\infty$ coincide \cite{Ning1995,vanDijk2008,Calogero1976,Zakhariev1990,Kiers1996,Aktosun2001,Bondarenko2017}. In addition to the unitarity relations for the $S$-matrix in the subspace of open scattering channels, the other relations connecting the components of the reflection and transmission matrices in the subspaces of closed and open channels have been deduced.

The condition determining the bound states has been obtained. In particular, it has been shown that the necessary condition for the presence of bound states is the presence of a nonzero subspace of closed channels. The asymptotics of the Jost solutions and of the transition matrix at a large spectral parameter have been investigated. It has been shown that the Schr\"{o}dinger equation under study is solvable in the shortwave approximation. The explicit expressions for the Jost solutions in the semiclassical approximation and the asymptotics of the transition matrix between the bases constituted by the Jost solutions have been obtained.

The results of the paper are applicable in electrodynamics of continuous media \cite{BelyakovBook,KazKor22,BKKL23}, in sound wave propagation theory \cite{Oldano2000,Mabuza2005,Georgiannis2011}, in describing the passage of electrons through heterostructures \cite{Botha2008,Sofianos2007,Liu1996,Shubin2019,Miroshnichenko2010}, in quantum chemistry \cite{Friedman1999}, in hydrodynamics and plasma physics \cite{Adam1986}, etc. In particular, the issue of proving unitarity of the $S$-matrix in open channels arises in describing photon scattering by metamaterials with a large spatial dispersion. The presence of spatial dispersion is caused by the presence of additional degrees of freedom -- the plasmon polaritons, which exist only inside the medium. As a result, there are always the closed channels in scattering of photons by such media, and unitarity becomes less obvious from the physical point of view. In this paper, we prove that the $S$-matrix is unitary for such systems as well provided the appropriate boundary conditions on the additional degrees of freedom are imposed.

\paragraph{Acknowledgments.}

This study was supported by the Tomsk State University Development Programme (Priority-2030).

%\newpage%\selectlanguage{english}

\end{document}